%
%
%
%


\documentclass[12pt]{SelfArx} 
\usepackage[english]{babel} 
\usepackage{listings}
\usepackage{lscape}
\usepackage{amsfonts,amssymb,amsbsy,amsmath,amsthm}

\theoremstyle{definition}

\newtheorem{algorithm}{Algorithm}
\newtheorem{definition}{Definition}
\theoremstyle{plain}
\newtheorem{theorem}{Theorem}
\newtheorem{lemma}{Lemma}


\setlength{\columnsep}{0cm} 
\setlength{\fboxrule}{0.75pt} 


\definecolor{color1}{RGB}{0,0,90} 
\definecolor{color2}{RGB}{0,20,20} 


\usepackage{hyperref} 
\hypersetup{hidelinks,colorlinks,breaklinks=true,urlcolor=red,citecolor=color1,linkcolor=color1,bookmarksopen=false,
	pdftitle={Prediction intervals for random-effects meta-analysis: a confidence distribution approach},
	pdfauthor={Kengo Nagashima, Hisashi Noma, Toshi A. Furukawa}}


\JournalInfo{\emph{Statistical Methods in Medical Research} 2019;\textbf{28}(6):1689--1702.} 
\Archive{DOI: 10.1177/0962280218773520} 

\PaperTitle{Prediction intervals for random-effects meta-analysis: a confidence distribution approach}
\ShortTitle{Prediction intervals for random-effects meta-analysis}

\Authors{Kengo Nagashima\textsuperscript{1}*, Hisashi Noma\textsuperscript{2}, and Toshi A. Furukawa\textsuperscript{3}} 
\affiliation{\footnotesize{\textsuperscript{1}\textit{Research Center for Medical and Health Data Science, The Institute of Statistical Mathematics, Japan.}}}
\affiliation{\footnotesize{\textsuperscript{2}\textit{Department of Data Science, The Institute of Statistical Mathematics, Japan.}}}
\affiliation{\footnotesize{\textsuperscript{3}\textit{Departments of Health Promotion and Human Behavior and of Clinical Epidemiology, Kyoto University Graduate School of Medicine / School of Public Health, Japan.}}}
\affiliation{\footnotesize{*\textbf{Corresponding author}: nshi@keio.jp}}
\affiliation{}
\affiliation{This is a postprint / corrected version (October 2025) of the following article: ``Nagashima K, Noma H, Furukawa TA. Prediction intervals for random-effects meta-analysis: a confidence distribution approach. \emph{Statistical Methods in Medical Research} 2019;\textbf{28}(6):1689--1702. DOI: \href{https://doi.org/10.1177/0962280218773520}{10.1177/0962280218773520}. Copyright \copyright \,[2018] (The Authors). Reprinted by permission of SAGE Publications.}
\affiliation{}
\affiliation{The third paragraph of Section 3 (Page 7) and Figure 1 (Page 8) has been corrected, because in Simulation (i) of Section 3, we generated incorrect random numbers. Therefore, we re-performed a simulation with the correct random numbers and corrected results of Simulation (i). These corrections do not alter the conclusion of the paper. We sincerely apologize for the inconvenience.}

\Keywords{
\footnotesize{
Confidence distributions; Coverage properties; Meta-analysis; Prediction intervals; Random-effects models
}
} 


\Abstract{
\footnotesize{
Prediction intervals are commonly used in meta-analysis with random-effects models.
One widely used method, the Higgins--Thompson--Spiegelhalter (HTS) prediction interval,  replaces the heterogeneity parameter with its point estimate, but its validity strongly depends on a large sample approximation.
This is a weakness in meta-analyses with few studies.
We propose an alternative based on bootstrap and show by simulations that its coverage is close to the nominal level, unlike the HTS method and its extensions.
The proposed method was applied in three meta-analyses.
}
}


\begin{document}

\flushbottom 

\maketitle 


\thispagestyle{empty} 


\section{Introduction}
\label{sec:1}
Meta-analysis is an important tool for combining the results of a set of related studies.
A common objective of meta-analysis is to estimate an overall mean effect and its confidence interval \cite{Borenstein2009}.
Fixed-effect models and random-effects models have been widely applied.

Fixed-effect models assume that treatment effects are equal for all studies.
The estimate of the common treatment effect and its confidence interval provide valuable information for applying the results to a future study or a study not included in the meta-analysis.
By contrast, random-effects models assume that the true treatment effects differ for each study.
The average treatment effect across all studies and its confidence interval have been used together with heterogeneity measures that are important for generalizability.
For instance, the $I^2$-statistic \cite{Higgins2002,Higgins2003} has been widely used as a heterogeneity measure.
However, researchers often interpret results of fixed-effect and random-effects models in the same manner\cite{Riley2011a,Riley2011b}.
They tend to focus on the average treatment effect estimate and its confidence interval.
It is necessary to consider that the treatment effects in each study may be different from the average treatment effect.
Higgins \emph{et al.} \cite{Higgins2009} proposed a prediction interval for a treatment effect in a future study.
This interval can be interpreted as the range of the predicted true treatment effect in a new study, given the realized (past) studies.
A prediction interval naturally accounts for the heterogeneity, and helps apply the results to a future study or a study not included in the meta-analysis.
Riley \emph{et al.} \cite{Riley2011a} recommended that a prediction interval should be reported alongside a confidence interval and heterogeneity measure.

Poor coverage of the confidence intervals in random-effects meta-analysis has been studied extensively\cite{Higgins2009,Brockwell2001,Noma2011}, especially in the context of synthesis of few studies\cite{Kontopantelis2013} (fewer than 20).
Recently, Partlett and Riley \cite{Partlett2017} confirmed that prediction intervals based on established methods, including the Higgins--Thompson--Spiegelhalter (HTS) prediction interval \cite{Higgins2009}, also could have poor coverage.
No explicit solution to this problem has been found thus far.

The HTS prediction interval has a fundamental problem.
It can be regarded as a plug-in estimator that replaces the heterogeneity parameter $\tau^2$ with its point estimate $\hat{\tau}^2$.
The $t$ distribution with $K-2$ degrees of freedom is used to approximately account for the uncertainty of $\hat{\tau}^2$, where $K$ is the number of studies.
Replacement with the $t$-approximation has a detrimental impact on the coverage probability, especially when $K$ is small, as is often the case in practice.
We also confirmed in Section \ref{sec:3} the HTS prediction intervals suffer from severe under-coverage.

In this article, we develop a new prediction interval that is valid under more general and realistic settings of meta-analyses in medical research, including those whose $K$ is especially small.
To avoid using a plug-in estimator, we propose a parametric bootstrap approach using a confidence distribution to account for the uncertainty of $\hat{\tau}^2$ with an exact distribution estimator of $\tau^2$ \cite{Schweder2002,Schweder2016,Singh2005,Singh2007,Xie2013}.
A confidence distribution, like a Bayesian posterior, is considered as a distribution function to estimate the parameter of interest in frequentist inference.

This article is organized as follows.
In Section \ref{sec:2}, we review the random-effects meta-analysis and HTS prediction interval, and then present the new method.
In Section \ref{sec:3}, we assess the performance of the HTS prediction interval and proposed prediction interval in simulation studies.
In Section \ref{sec:4}, we apply the developed method to three meta-analysis data sets.
We conclude with a brief discussion.

\section{Method}
\label{sec:2}
\subsection{The random-effects model and the exact distribution of Cochran's $Q$ statistic}
\label{sec:2.1}
We consider the random-effects model
\cite{Higgins2009,Cochran1937,Cochran1954,DerSimonian1986,Whitehead1991}.

Let the random variable $Y_k$ ($k=1, 2, \ldots, K$) be an effect size estimate from the $k$-th study. 
The random-effects model can be defined as
\begin{equation}
\label{equ:1}
\begin{array}{lll}
Y_k &=& \theta_k+\epsilon_k,\\
\theta_k &=& \mu+u_k,
\end{array}
\end{equation}
where $\theta_k$ is the true effect size of the $k$-th study, $\mu$ is the grand mean parameter of the average treatment effect, $\epsilon_k$ is the random error within a study, and $u_k$ is a random variable reflecting study-specific deviation from the average treatment effect.
It is assumed that $\epsilon_k$ and $u_k$ are independent, with $\epsilon_k \sim N(0, \sigma_k^2)$ and $u_k \sim N(0, \tau^2)$, where the within-study variances $\sigma_k^2$ are known and replaced by their efficient estimates \cite{Biggerstaff1997,Biggerstaff2008}, and the across-studies variance $\tau^2$ is an unknown parameter that reflects the treatment effects heterogeneity.

Under the model in \eqref{equ:1}, the marginal distribution of $Y_k$ is a normal distribution with the mean $\mu$ and the variance $\sigma_k^2+\tau^2$.

Random-effects meta-analyses generally estimate $\mu$ to evaluate the average treatment effect and $\tau^2$ to evaluate the treatment effects heterogeneity.
The average treatment effect $\mu$ is estimated by
\[
\frac{\sum_{k=1}^K (\sigma_k^2+\hat{\tau}^2)^{-1} Y_k}{\sum_{k=1}^K (\sigma_k^2+\hat{\tau}^2)^{-1}},
\]
where $\hat{\tau}^2$ is an estimator of the heterogeneity parameter $\tau^2$.
Estimators of $\tau^2$, such as the DerSimonian and Laird estimator \cite{DerSimonian1986}, have been applied \cite{Sidik2007}.
In this paper, we discuss prediction intervals using the DerSimonian and Laird estimator,
\[
\hat{\tau}_{DL}^2=\max[0, \hat{\tau}_{UDL}^2],
\]
and its untruncated version,
\[
\hat{\tau}_{UDL}^2=\frac{Q-(K-1)}{S_1+S_2/S_1},
\]
where $Q=\sum_{k=1}^K v_k(Y_k-\bar{Y})^2$ is Cochran's $Q$ statistic, $v_k=\sigma_k^{-2}$, $\bar{Y}=\sum_{k=1}^K v_k Y_k/\sum_{k=1}^K v_k$, and $S_r=\sum_{k=1}^K v_k^r$ for $r=1, 2$.
Under the model in \eqref{equ:1}, Biggerstaff and Jackson \cite{Biggerstaff2008} derived the exact distribution function of $Q$, $F_Q(q; \tau^2)$, to obtain confidence intervals for $\tau^2$.
Cochran's $Q$ is a quadratic form that can be written $\boldsymbol{\mathrm{Y}}^{\mathrm{T}}\boldsymbol{\mathrm{A}}\boldsymbol{\mathrm{Y}}$, where $\boldsymbol{\mathrm{Y}}=(Y_1, Y_2, \ldots, Y_K)^{\mathrm{T}}$, $\boldsymbol{\mathrm{A}}=\boldsymbol{\mathrm{V}} - \boldsymbol{\mathrm{v}}\boldsymbol{\mathrm{v}}^{\mathrm{T}}/v_+$, $\boldsymbol{\mathrm{V}}=\mathrm{diag}(v_1, v_2, \ldots, v_K)$, $\boldsymbol{\mathrm{v}}=(v_1, v_2, \ldots, v_K)^{\mathrm{T}}$, $v_+ =\sum_{k=1}^K v_k$, and the superscript `T' denotes matrix transposition.
Here and subsequently, $\boldsymbol{\mathrm{Z}}=\boldsymbol{\mathrm{\Sigma}}^{-1/2} (\boldsymbol{\mathrm{Y}}-\boldsymbol{\mathrm{\mu}}) \sim N(\boldsymbol{\mathrm{0}}, \boldsymbol{\mathrm{I}})$, $\boldsymbol{\mathrm{S}}=\boldsymbol{\mathrm{\Sigma}}^{1/2}\boldsymbol{\mathrm{A}}\boldsymbol{\mathrm{\Sigma}}^{1/2}$, $\boldsymbol{\mathrm{\mu}}=(\mu, \mu, \ldots, \mu)^{\mathrm{T}}$, $\boldsymbol{\mathrm{\Sigma}}=\mathrm{diag}(\sigma_1^2+\tau^2, \sigma_2^2+\tau^2, \ldots, \sigma_K^2+\tau^2)$, $\boldsymbol{\mathrm{0}}=(0, 0, \ldots, 0)^{\mathrm{T}}$, and $\boldsymbol{\mathrm{I}}=\mathrm{diag}(1,1,\ldots,1)$.

\begin{lemma}
	\label{lem:1}
	Under the model in \eqref{equ:1}, $Q$ can be expressed as $\boldsymbol{\mathrm{Z}}^{\mathrm{T}}\boldsymbol{\mathrm{S}}\boldsymbol{\mathrm{Z}}$; then $Q$ has the same distribution as the random variable $\sum_{k=1}^K \lambda_k \chi^2_k(1)$, where $\lambda_k \geq 0$ are the eigenvalues of matrix $\boldsymbol{\mathrm{S}}$, and $\chi_1^2(1), \chi_2^2(1), \ldots, \chi_K^2(1)$ are $K$ independent central chi-square random variables each with one degree of freedom.
\end{lemma}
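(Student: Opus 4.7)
The plan is to pass from the given quadratic form $\mathbf{Y}^{\mathrm{T}}\mathbf{A}\mathbf{Y}$ to the claimed distributional representation in three steps: re-center, standardize, and diagonalize.

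First I would observe that the grand-mean vector lies in the kernel of $\mathbf{A}$. Indeed, $\mathbf{A}\boldsymbol{\mu} = \mu\,\mathbf{A}\mathbf{1} = \mu\bigl(\mathbf{v} - \mathbf{v}\,v_+/v_+\bigr) = \mathbf{0}$, so that $\mathbf{Y}^{\mathrm{T}}\mathbf{A}\mathbf{Y} = (\mathbf{Y}-\boldsymbol{\mu})^{\mathrm{T}}\mathbf{A}(\mathbf{Y}-\boldsymbol{\mu})$. Since under the marginal distribution $\mathbf{Y}-\boldsymbol{\mu} = \boldsymbol{\Sigma}^{1/2}\mathbf{Z}$ with $\mathbf{Z}\sim N(\mathbf{0},\mathbf{I})$, substitution gives $Q = \mathbf{Z}^{\mathrm{T}}\boldsymbol{\Sigma}^{1/2}\mathbf{A}\boldsymbol{\Sigma}^{1/2}\mathbf{Z} = \mathbf{Z}^{\mathrm{T}}\mathbf{S}\mathbf{Z}$.

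Second, because $\mathbf{A}$ is symmetric and $\boldsymbol{\Sigma}^{1/2}$ is symmetric, $\mathbf{S}$ is symmetric and admits a spectral decomposition $\mathbf{S} = \mathbf{P}\,\mathrm{diag}(\lambda_1,\ldots,\lambda_K)\,\mathbf{P}^{\mathrm{T}}$ with $\mathbf{P}$ orthogonal. Setting $\mathbf{W} = \mathbf{P}^{\mathrm{T}}\mathbf{Z}$, the rotational invariance of the standard normal gives $\mathbf{W}\sim N(\mathbf{0},\mathbf{I})$, so its components $W_k$ are independent with $W_k^2 \sim \chi_k^2(1)$. Consequently $Q = \mathbf{W}^{\mathrm{T}}\mathrm{diag}(\lambda_1,\ldots,\lambda_K)\mathbf{W} = \sum_{k=1}^K \lambda_k W_k^2$, which has the stated mixture-of-chi-squares distribution.

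Finally, non-negativity of the $\lambda_k$ needs to be justified. The cleanest route is to note that $Q = \sum_k v_k(Y_k - \bar{Y})^2 \geq 0$ for every realization of $\mathbf{Y}$, hence $\mathbf{A}$ is positive semidefinite; conjugation by the symmetric matrix $\boldsymbol{\Sigma}^{1/2}$ preserves positive semidefiniteness, so the eigenvalues of $\mathbf{S}$ are non-negative.

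The whole argument is standard once the centering step is in place, so I do not expect a real obstacle; the only point deserving care is verifying $\mathbf{A}\boldsymbol{\mu}=\mathbf{0}$, because without it one cannot replace $\mathbf{Y}$ by the mean-zero vector $\boldsymbol{\Sigma}^{1/2}\mathbf{Z}$ and the representation through $\mathbf{S}$ would not hold.
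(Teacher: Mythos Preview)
Your proof is correct and follows essentially the same route the paper sketches: the paper attributes the result to Biggerstaff and Jackson and notes it rests on the location invariance of $Q$ (your verification that $\mathbf{A}\boldsymbol{\mu}=\mathbf{0}$) together with the standard distribution theory of quadratic forms in normal variables (your spectral-decomposition step). The only cosmetic difference is that the paper cites the scalar decomposition $Q=\sum_k v_k(Y_k-\mu)^2 - v_+(\bar{Y}-\mu)^2$ as the expression of location invariance, whereas you obtain the same fact directly at the matrix level.
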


Lemma \ref{lem:1} was proven by Biggerstaff and Jackson \cite{Biggerstaff2008} using the location invariance of $Q$ (e.g., $Q$ can be decomposed as $\sum_{k=1}^K v_k(Y_k-\mu)^2 -v_+ (\bar{Y}-\mu)^2$), and distribution theory of quadratic forms in normal variables\cite{Scheffe1959,Graybill1976,Mathai1992}.

\subsection{The Higgins--Thompson--Spiegelhalter prediction interval}
\label{sec:2.2}
Suppose $\tau^2$ is known, $\hat{\mu} \sim N(\mu, \mathrm{SE}[\hat{\mu}]^2)$ and the observation in a future study $\theta_{new} \sim N(\mu, \tau^2)$, where $\mathrm{SE}[\hat{\mu}]=\sqrt{1/\sum_{k=1}^K w_k}$ is a standard error of $\hat{\mu}$ given $\tau^2$, and $w_k=(\sigma_k^2+\tau^2)^{-1}$.
Assuming independence of $\theta_{new}$ and $\hat{\mu}$ given $\mu$, $\theta_{new}-\mu \sim N(0, \tau^2+\mathrm{SE}[\hat{\mu}]^2)$.
To replace the unknown $\tau^2$ by its estimator $\hat{\tau}_{DL}^2$, the following approximation is used.
If $(K-2)(\hat{\tau}_{DL}^2+\widehat{\mathrm{SE}}[\hat{\mu}]^2)/(\tau^2+\mathrm{SE}[\hat{\mu}]^2)$ is approximately distributed as $\chi^2(K-2)$, then $(\theta_{new}-\hat{\mu})/\sqrt{\hat{\tau}_{DL}^2+\widehat{\mathrm{SE}}[\hat{\mu}]^2} \sim t(K-2)$, where $\widehat{\mathrm{SE}}[\hat{\mu}]=\sqrt{1/\sum_{k=1}^K \hat{w}_k}$ is the standard error estimator of $\hat{\mu}$, and $\hat{w}_k=(\sigma_k^2+\hat{\tau}_{DL}^2)^{-1}$.
By this approximation, the HTS prediction interval is
\[
\left[
\hat{\mu}-t_{K-2}^{\alpha}\sqrt{\hat{\tau}_{DL}^2+\widehat{\mathrm{SE}}[\hat{\mu}]^2},~
\hat{\mu}+t_{K-2}^{\alpha}\sqrt{\hat{\tau}_{DL}^2+\widehat{\mathrm{SE}}[\hat{\mu}]^2}
\right],
\]
where $t_{K-2}^{\alpha}$ is the $100(1-\alpha/2)$ percentile of the $t$ distribution with $K-2$ degrees of freedom.
The $t$-approximation is appropriate only when both the number of studies and heterogeneity variance are large.

Several HTS-type prediction intervals following restricted maximum likelihood (REML) estimation of $\tau^2$ have been proposed by Partlett and Riley\cite{Partlett2017}.
For example, they discussed a HTS-type prediction interval following REML with the Hartung--Knapp variance estimator \cite{Hartung2001} (HTS-HK) that replaces $\hat{\mu}$, $\hat{\tau}_{DL}^2$, and $\widehat{\mathrm{SE}}[\hat{\mu}]^2$ in the HTS prediction interval with $\hat{\mu}_R$, $\hat{\tau}_{R}^2$, and $\widehat{\mathrm{SE}}_{HK}[\hat{\mu}_R]^2$, and a HTS-type prediction interval following REML with the Sidik--Jonkman bias-corrected variance estimator \cite{Sidik2006} (HTS-SJ) that replaces $\hat{\mu}$, $\hat{\tau}_{DL}^2$, and $\widehat{\mathrm{SE}}[\hat{\mu}]^2$ in the HTS prediction interval with $\hat{\mu}_R$, $\hat{\tau}_{R}^2$, and $\widehat{\mathrm{SE}}_{SJ}[\hat{\mu}_R]^2$, where $\hat{\tau}_{R}^2$ is the REML estimator for the heterogeneity variance \cite{Harville1977,Raudenbush1985,Sidik2007} which is an iterative solution of the equation
\[
\hat{\tau}_{R}^2=
\frac{\sum_{k=1}^K \hat{w}_{R,k}^2\{(Y_k - \hat{\mu}_{R})^2 + 1/\sum_{l=1}^K \hat{w}_{R,k} - \sigma_k^2\}}{\sum_{k=1}^K \hat{w}_{R,k}^2}
,
\]
$\hat{w}_{R,k}=(\sigma_k^2+\hat{\tau}_R^2)^{-1}$, $\hat{\mu}_{R}=\sum_{k=1}^K \hat{w}_{R,k} Y_k / \sum_{k=1}^K \hat{w}_{R,k}$, the Hartung--Knapp variance estimator is defined as
\[
\widehat{\mathrm{SE}}_{HK}[\hat{\mu}_{R}]^2=
\frac{1}{K-1}\sum_{k=1}^K \frac{\hat{w}_{R,k}(Y_k-\hat{\mu}_{R})^2}{\sum_{l=1}^K \hat{w}_{R,l}},
\]
the Sidik--Jonkman bias-corrected variance estimator
\[
\widehat{\mathrm{SE}}_{SJ}[\hat{\mu}_{R}]^2=
\frac{\sum_{k=1}^K \hat{w}_{R,k}^2 (1-\hat{h}_k)^{-1} (Y_k-\hat{\mu}_{R})^2}{(\sum_{k=1}^K \hat{w}_{R,k})^2},
\]
and
\[
\hat{h}_k=
\frac{2\hat{w}_{R,k}}{\sum_{l=1}^K\hat{w}_{R,l}}-
\frac{\sum_{l=1}^K \hat{w}_{R,l}^2(\sigma_l^2+\hat{\tau}_{R}^2)}{(\sigma_k^2+\hat{\tau}_{R}^2) \sum_{l=1}^K \hat{w}_{R,l}^2}.
\]
The empirical coverage of the HTS-HK and HTS-SJ prediction intervals is close to the nominal level under large heterogeneity variance and $K \geq 5$\cite{Partlett2017}.

The HTS prediction intervals show severe under-coverage under small heterogeneity variance or for few studies (see Partlett and Riley\cite{Partlett2017} and Section \ref{sec:3}).
We introduce a prediction interval in which uncertainty about $\tau^2$ is accounted for and show that it is valid under a small number of studies.

\subsection{The proposed prediction interval}
\label{sec:2.3}
We address the issue discussed in Section \ref{sec:2.2} by constructing a new prediction interval via a parametric bootstrap with the exact distribution of $\hat{\tau}_{UDL}^2$ by using a confidence distribution (see Section \ref{sec:2.4}).
The proposed method uses an approximation that differs from those used by Higgins \emph{et al.} \cite{Higgins2009}.
The HTS prediction interval essentially combines the following two approximations: $(\hat{\mu}-\mu)/\sqrt{\widehat{\mathrm{SE}}[\hat{\mu}]}$ approximately distributed as $N(0, 1)$, which is often not satisfactory \cite{Hartung1999}, and $(K-2)(\hat{\tau}_{DL}^2+\widehat{\mathrm{SE}}[\hat{\mu}])/(\tau^2+\mathrm{SE}[\hat{\mu}])$ is approximately distributed as $\chi^2(K-2)$.

From now on we make the following assumptions: Let the observation in a future study $\theta_{new} \sim N(\mu, \tau^2)$, $Y_k \sim N(\mu, \sigma_k^2+\tau^2)$ given $\sigma_k^2$ and $\tau^2$, and $\theta_{new}$ and $\bar{\mu}=\sum_{k=1}^{K}w_k Y_k/\sum_{k=1}^{K}w_k$ are independent.
In Hartung \cite{Hartung1999} and Hartung and Knapp \cite{Hartung2001}, it was shown that assuming normality of $Y_k$, $(\mu-\bar{\mu})/\mathrm{SE}_H[\bar{\mu}]$ is $t$-distributed with $K-1$ degrees of freedom, and $\mathrm{SE}_H[\bar{\mu}]$ is stochastically independent of $\bar{\mu}$, where $\mathrm{SE}_H[\bar{\mu}]^2=\frac{1}{K-1}\sum_{k=1}^K \frac{w_k}{w_+}(Y_k-\bar{\mu})^2$, and $w_+=\sum_{k=1}^K w_k$.
By replacing $\tau^2$ in $(\bar{\mu}-\mu)/\mathrm{SE}_H[\bar{\mu}]$ with an appropriate estimate $\hat{\tau}^2$, $(\hat{\mu}-\mu)/\widehat{\mathrm{SE}}_H[\hat{\mu}]$ is approximately $t$-distributed with $K-1$ degrees of freedom, where $\widehat{\mathrm{SE}}_H[\hat{\mu}]^2=\frac{1}{K-1}\sum_{k=1}^K \frac{\hat{w}_k}{\hat{w}_+}(Y_k-\hat{\mu})^2$, and $\hat{w}_+=\sum_{k=1}^K \hat{w}_k$.
This approximation exhibits better performance than $(\hat{\mu}-\mu)/\sqrt{\widehat{\mathrm{SE}}[\hat{\mu}]} \overset{approx.}{\sim} N(0, 1)$, even for a few studies (see Theorem 4.4 of Hartung \cite{Hartung1999}).

The above assumptions and results lead to a system of equations,
\begin{equation}
\label{equ:2}
\left\{
\begin{array}{c}
\displaystyle \frac{\theta_{new}-\mu}{\tau}=Z\\
\displaystyle \frac{\bar{\mu}-\mu}{\mathrm{SE}_H[\bar{\mu}]}=t_{K-1}\\
\end{array}
\right.,
\end{equation}
where $Z \sim N(0, 1)$ and $t_{K-1} \sim t(K-1)$.
Solving for $\theta_{new}$ in \eqref{equ:2} yields
\begin{equation}
\label{equ:3}
\theta_{new}=\bar{\mu}+Z\tau-t_{K-1}\mathrm{SE}_H[\bar{\mu}],
\end{equation}
and the prediction distribution has the same distribution as $\theta_{new}$ (even with $\tau^2$ unknown).
By replacing $\tau^2$ in \eqref{equ:3} with an appropriate estimator (not an estimate), we have
\[
\hat{\theta}_{new}=\hat{\mu}+Z\hat{\tau}_{UDL}-t_{K-1}\widehat{\mathrm{SE}}_H[\hat{\mu}],
\]
and an approximate prediction distribution can be given by the distribution of $\hat{\theta}_{new}$.
We use the untruncated estimator $\hat{\tau}_{UDL}^2$ here, because we do not need the truncation to consider the distribution of an estimator of $\tau^2$.
Hence, $\Pr (c_l<\theta_{new}<c_u) = 1-\alpha$ can be approximately evaluated by the distribution of $\hat{\theta}_{new}$.
Since $\hat{\theta}_{new}$ includes three random components, $\hat{\tau}_{UDL}^2$, $Z$, and $t_{K-1}$, this gives the following algorithm for the proposed prediction interval.\\

\begin{algorithm}
	An algorithm for the proposed prediction interval.
	\begin{enumerate}
		\item Generate $B$ bootstrap samples $\tilde{\tau}_b^2$ ($b=1, \ldots, B$) that are drawn from the exact distribution of $\hat{\tau}_{UDL}^2$, $z_b$ that are drawn from $N(0, 1)$, and $t_b$ that are drawn from $t(K-1)$.
		\item Calculate $\tilde{\mu}_b=\sum_{k=1}^K \tilde{w}_{bk}y_k/\sum_{k=1}^K \tilde{w}_{bk}$, and $\tilde{\theta}_{new,b}=\tilde{\mu}_b+z_b \tilde{\tau}_b-t_b\tilde{\mathrm{SE}}_{H,b}[\tilde{\mu}_b]$, where $\tilde{w}_{bk}=(\sigma_k^2+\tilde{\tau}_b^2)^{-1}$, $\tilde{\mathrm{SE}}_{H,b}[\tilde{\mu}_b]^2=\frac{1}{K-1} \sum_{k=1}^K \frac{\tilde{w}_{bk}}{\tilde{w}_{b+}}(y_k-\tilde{\mu}_b)^2$, and $\tilde{w}_{b+}=\sum_{k=1}^K \tilde{w}_{bk}$.
		\item Calculate the prediction limits $c_l$ and $c_u$ that are $100\times \alpha/2$ and $100\times(1-\alpha/2)$ percentage points of $\tilde{\theta}_{new,b}$, respectively.
	\end{enumerate}
\end{algorithm}

An R package implementing the new method with the three data sets (see Section \ref{sec:4}) and a documentation is available at the publisher's web-site, the CRAN website (https://cran.r-project.org/package=pimeta) and GitHub (https://github.com/nshi-stat/pimeta/).

\subsection{Sampling from the exact distribution of the estimator of $\tau^2$}
\label{sec:2.4}
Confidence distribution is a distribution estimator that can be defined and interpreted in a frequentist framework in which the parameter is a non-random quantity.
A confidence distribution for the parameter of interest $\phi$, as described below, can be easily defined as the cumulative distribution function of a statistic.
The following definition of a confidence distribution was presented in Xie and Singh\cite{Xie2013}.
In the definition, $\Phi$ is the parameter space of the unknown $\phi$, $\boldsymbol{\mathrm{Y}}$ is a random vector, and $\mathcal{Y}$ is the sample space corresponding to sample data $\boldsymbol{\mathrm{y}}=(y_1, y_2, \ldots, y_K)^{\mathrm{T}}$.

\begin{definition}
	\label{def:1}
	(R1) A function $H(\cdot)=H(\boldsymbol{\mathrm{y}},\phi)$ on $\mathcal{Y}\times \Phi \to [0, 1]$ is called a confidence distribution for a parameter $\phi$;
	(R2) If for each given $\boldsymbol{\mathrm{y}} \in \mathcal{Y}$, $H(\cdot)$ is a cumulative distribution function on $\phi$;
	(R3) At the true parameter value $\phi=\phi_0$, $H(\phi_0) \equiv H(\boldsymbol{\mathrm{y}}, \phi_0)$, as a function of the sample $\boldsymbol{\mathrm{y}}$, follows the uniform distribution $U(0, 1)$.
\end{definition}

\noindent
Confidence distribution has a theoretical relationship to the fiducial approach \cite{Fisher1935}, and recent developments \cite{Schweder2002,Schweder2016,Singh2005,Singh2007,Xie2013} have provided useful statistical tools that are more widely applicable than classical frequentist methods.
For example, Efron's bootstrap distribution \cite{Efron1998} is a confidence distribution and a distribution estimator of $\phi$.
In meta-analysis, the $Q$-profile method for an approximate confidence interval for $\tau^2$ \cite{Viechtbauer2007} can be considered as an application of confidence distribution \cite{Schweder2016}.
In this section, we propose the exact distribution of $\hat{\tau}_{UDL}^2$, which is a distribution function for estimating the parameter $\tau^2$ using a confidence distribution, and then develop a method of sampling from the exact distribution.
A useful theorem (Theorem \ref{the:1}) is introduced that provides a condition for confidence distribution.

\begin{theorem}
	\label{the:1}
	If a cumulative distribution function of a statistic, $T(\boldsymbol{\mathrm{Y}})$, is $F_T(T(\boldsymbol{\mathrm{y}}); \phi) \equiv F_T(T(\boldsymbol{\mathrm{Y}})\leq T(\boldsymbol{\mathrm{y}}); \phi)$, and $F_T$ is a strictly monotone (without loss of generality, assume that it is decreasing) function in $\phi$ with the parameter space $\Phi=\{\phi: \phi_{\min}\leq \phi \leq \phi_{\max}\}$ for each sample $\boldsymbol{\mathrm{y}}$, then $H(\phi)=1-F_T(T(\boldsymbol{\mathrm{y}}); \phi)$ is a confidence distribution for $\phi$ that satisfies Definition \ref{def:1}.
\end{theorem}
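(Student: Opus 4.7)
The plan is to verify the three conditions (R1), (R2), (R3) of Definition \ref{def:1} directly, since the theorem is essentially a reformulation of the classical probability integral transform combined with the monotonicity hypothesis on $F_T$. Each condition maps to one clean observation about $H(\phi)=1-F_T(T(\boldsymbol{\mathrm{y}});\phi)$, so the proof decomposes naturally into three short arguments.

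First I would dispatch (R1): since $F_T$ is a cumulative distribution function of $T(\boldsymbol{\mathrm{Y}})$ for every fixed $\phi$, it takes values in $[0,1]$, hence $H=1-F_T$ also maps $\mathcal{Y}\times\Phi$ into $[0,1]$. Next I would handle (R2) by fixing an arbitrary $\boldsymbol{\mathrm{y}}\in\mathcal{Y}$ and showing that $\phi\mapsto H(\phi)$ is a cumulative distribution function on $\Phi$. The hypothesis that $F_T(T(\boldsymbol{\mathrm{y}});\phi)$ is strictly decreasing in $\phi$ immediately makes $H$ strictly increasing in $\phi$; right-continuity of $H$ in $\phi$ follows from right-continuity of $F_T$ in $\phi$, which is the standard regularity assumed for the family. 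The boundary behavior (values approaching $0$ at $\phi_{\min}$ and $1$ at $\phi_{\max}$) comes from evaluating the monotone function at the endpoints of $\Phi$; on a bounded parameter space one can normalize so that these limits are exactly $0$ and $1$, matching the conventional interpretation of a confidence distribution as a CDF supported on $\Phi$.

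Then for (R3) I would apply the probability integral transform. Regarding $H(\phi_0)$ as a function of the random sample $\boldsymbol{\mathrm{Y}}$, it equals $1-F_T(T(\boldsymbol{\mathrm{Y}});\phi_0)$, where $F_T(\cdot;\phi_0)$ is by definition the CDF of $T(\boldsymbol{\mathrm{Y}})$ under the true parameter value $\phi_0$. Under continuity of $F_T$ in its first argument, $F_T(T(\boldsymbol{\mathrm{Y}});\phi_0)\sim U(0,1)$; since $1-U(0,1)\stackrel{d}{=}U(0,1)$, condition (R3) is satisfied.

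The only nontrivial point, and the one I would flag explicitly, is that the argument tacitly needs continuity of $F_T(\cdot;\phi_0)$ for the integral transform step to give exactly the uniform distribution; for the meta-analysis application this is fine because $T=\hat{\tau}_{UDL}^2$ has a continuous distribution by Lemma~\ref{lem:1}. A secondary, more cosmetic obstacle is the endpoint behavior in (R2): strict monotonicity alone does not force $H(\phi_{\min})=0$ and $H(\phi_{\max})=1$, but since the definition of a confidence distribution only requires $H(\cdot)$ to be a CDF on $\Phi$ (not on all of $\mathbb{R}$), this is handled by interpreting $H$ as the CDF of a distribution supported on $\Phi$, with the monotone limits playing the role of the endpoint values. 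No deeper structural result is needed beyond these remarks.
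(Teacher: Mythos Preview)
Your proposal is correct and follows the same three-step verification of (R1)--(R3) as the paper's own proof. The one minor difference is in (R2): the paper additionally assumes differentiability of $F_T$ in $\phi$ and exhibits an explicit density $h(\phi)=-\mathrm{d}F_T/\mathrm{d}\phi$ so that $H(\phi)=\int_{\phi_{\min}}^{\phi}h(s)\,\mathrm{d}s$, whereas your direct monotonicity-plus-endpoint argument is slightly more elementary and avoids that extra regularity; your explicit flag on the continuity of $F_T(\cdot;\phi_0)$ needed for the probability integral transform in (R3) is likewise a point the paper leaves tacit.
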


\begin{lemma}
	\label{lem:2}
	Under the model in \eqref{equ:1}, $H(\tau^2)=1-F_Q(q; \tau^2)$ is a confidence distribution for $\tau^2$.	
\end{lemma}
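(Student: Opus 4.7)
The plan is to apply Theorem \ref{the:1} with $T(\boldsymbol{\mathrm{Y}}) = Q$, $\phi = \tau^2$, parameter space $\Phi = [0, \infty)$, and $F_T(\,\cdot\,;\tau^2) = F_Q(\,\cdot\,;\tau^2)$. Lemma \ref{lem:1} already supplies the exact distribution function $F_Q$, so the sole nontrivial task is to verify the hypothesis of Theorem \ref{the:1}: for every fixed $q>0$, the map $\tau^2 \mapsto F_Q(q;\tau^2)$ is strictly monotone (decreasing) on $\Phi$.

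To establish this strict monotonicity, I would work directly with the representation $Q \stackrel{d}{=} \sum_{k=1}^K \lambda_k(\tau^2) \chi_k^2(1)$ from Lemma \ref{lem:1}, where the $\chi_k^2(1)$ are independent and $\lambda_k(\tau^2)$ are the eigenvalues of $\boldsymbol{\mathrm{S}}(\tau^2) = \boldsymbol{\mathrm{\Sigma}}(\tau^2)^{1/2}\,\boldsymbol{\mathrm{A}}\,\boldsymbol{\mathrm{\Sigma}}(\tau^2)^{1/2}$. A one-line Cauchy--Schwarz computation shows that $\boldsymbol{\mathrm{A}} = \boldsymbol{\mathrm{V}} - \boldsymbol{\mathrm{v}}\boldsymbol{\mathrm{v}}^{\mathrm{T}}/v_+$ is positive semidefinite, since $\boldsymbol{\mathrm{x}}^{\mathrm{T}}\boldsymbol{\mathrm{A}}\boldsymbol{\mathrm{x}} = \sum_k v_k x_k^2 - (\sum_k v_k x_k)^2/v_+ \ge 0$; and because $\boldsymbol{\mathrm{A}}\boldsymbol{\mathrm{1}} = \boldsymbol{\mathrm{0}}$, its rank is exactly $K-1$, so precisely one eigenvalue vanishes.

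For $\tau_1^2 < \tau_2^2$, we have $\boldsymbol{\mathrm{\Sigma}}(\tau_2^2) - \boldsymbol{\mathrm{\Sigma}}(\tau_1^2) = (\tau_2^2 - \tau_1^2)\boldsymbol{\mathrm{I}}$. The nonzero eigenvalues of $\boldsymbol{\mathrm{\Sigma}}^{1/2}\boldsymbol{\mathrm{A}}\boldsymbol{\mathrm{\Sigma}}^{1/2}$ coincide with those of $\boldsymbol{\mathrm{A}}^{1/2}\boldsymbol{\mathrm{\Sigma}}\boldsymbol{\mathrm{A}}^{1/2}$, and
\[
\boldsymbol{\mathrm{A}}^{1/2}\boldsymbol{\mathrm{\Sigma}}(\tau_2^2)\boldsymbol{\mathrm{A}}^{1/2} - \boldsymbol{\mathrm{A}}^{1/2}\boldsymbol{\mathrm{\Sigma}}(\tau_1^2)\boldsymbol{\mathrm{A}}^{1/2} = (\tau_2^2 - \tau_1^2)\,\boldsymbol{\mathrm{A}}
\]
is PSD of rank $K-1$. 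Weyl's monotonicity theorem then gives $\lambda_k(\tau_2^2) \ge \lambda_k(\tau_1^2)$ (with eigenvalues ordered identically), with strict inequality on the $K-1$ positive coordinates. Since each summand $\lambda_k(\tau^2)\chi_k^2(1)$ is stochastically strictly increasing in $\lambda_k$ whenever $\lambda_k > 0$, independence of the $\chi_k^2(1)$ implies that the whole sum is stochastically strictly increasing in $\tau^2$; equivalently, $F_Q(q;\tau^2)$ is strictly decreasing in $\tau^2$ for every $q > 0$.

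With strict monotonicity in hand, Theorem \ref{the:1} applies verbatim and delivers that $H(\tau^2) = 1 - F_Q(q;\tau^2)$ is a confidence distribution for $\tau^2$. The principal obstacle in the plan is precisely the strict monotonicity step: everything else is direct substitution into Theorem \ref{the:1} and invocation of Lemma \ref{lem:1}.
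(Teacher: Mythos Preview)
Your proposal is correct and follows exactly the paper's route: invoke Theorem~\ref{the:1} once the strict monotonicity of $\tau^2\mapsto F_Q(q;\tau^2)$ is established. The paper simply cites Jackson (2013) for that monotonicity, whereas you supply a self-contained eigenvalue argument; the only place to tighten is that Weyl's inequality by itself yields only $\lambda_k(\tau_2^2)\ge\lambda_k(\tau_1^2)$ --- the \emph{strict} increase on the $K-1$ positive eigenvalues follows most cleanly by restricting both $\boldsymbol{\mathrm{A}}^{1/2}\boldsymbol{\mathrm{\Sigma}}(\tau_i^2)\boldsymbol{\mathrm{A}}^{1/2}$ to the common $(K{-}1)$-dimensional range of $\boldsymbol{\mathrm{A}}^{1/2}$, on which the perturbation $(\tau_2^2-\tau_1^2)\boldsymbol{\mathrm{A}}$ is positive definite.
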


The proof of Theorem \ref{the:1} is easy and hence is omitted.
Lemma \ref{lem:2} can be easily proved by using Theorem \ref{the:1}, because $F_Q(q; \tau^2)$ is a strictly decreasing function in $\tau^2$ \cite{Jackson2013}.
Note that we use the untruncated version of an estimator of $\tau^2$ with the parameter space $\Phi=[\tau_{\min}^2, \infty]$, and $\tau_{\min}^2$ can be negative.

The proposed algorithm samples from the confidence distribution,  $H(\tau^2)=1-F_Q(q_{obs}; \tau^2)$, where $q_{obs}$ is the observed value of $Q$.
By applying Lemma \ref{lem:2} and the inverse transformation method, if $U$ is distributed as $U(0, 1)$ then $H^{-1}(U)$ follows the distribution $H(\tau^2)$.
A sample $\tilde{\tau}^2=H^{-1}(u)$ can be computed by numerical inversion \cite{Forsythe1977} of $H(\tilde{\tau}^2)=u$, where $u$ is an observed value of the random variable $U$.
If $H(0) > u$, then the sample is truncated to zero ($\tilde{\tau}^2=0$).
It follows from Lemma \ref{lem:1} that $F_Q(q; \tau^2)$ is the distribution function of a positive linear combination of $\chi^2$ random variables.
It can be calculated with the Farebrother's algorithm \cite{Farebrother1984}.

\section{Simulations}
\label{sec:3}
We assessed the properties of the HTS and proposed prediction intervals through simulations.

Simulation data was generated by the random-effects model in \eqref{equ:1}, assuming independent normal errors  $\epsilon_k \sim N(0, \sigma_k^2)$ and $u_k \sim N(0, \tau^2)$.
We conducted three sets of simulations described below.
\begin{enumerate}
	\renewcommand{\labelenumi}{(\theenumi)}
	\renewcommand{\theenumi}{\roman{enumi}}
	\item By reference to Brockwell and Gordon \cite{Brockwell2001,Brockwell2007} and Jackson \cite{Jackson2013}, parameter settings that mimic meta-analyses for estimating an overall mean log odds-ratio were determined.
	The average treatment effect $\mu$ was fixed at $0$, as no generality is lost by setting $\mu$ to zero.
	The across-studies variance was set to $\tau^2=0.01,0.05,0.1,0.2,0.3,0.4$, or $0.5$ \cite{Turner2012,Rhodes2015}; mean $I^2$ values were 29.8\%, 66.0\%, 79.1\%, 88.2\%, 91.8\%, 93.7\%, or 94.9\%, respectively.
	The within-study variances $\sigma_k^2$ were generated from a scaled $\chi^2$ distribution with one degree of freedom, multiplied by $0.25$, and then truncated to lie within $[0.009, 0.6]$.
	The number of studies was set to $K=3,5,10,15,20$, or $25$.
	\item In reference to Partlett and Riley\cite{Partlett2017}, parameter settings were determined to evaluate the empirical performance of prediction intervals under various relative degrees of heterogeneity scenarios.
	The within-study variances $\sigma_k^2$ were generated from $\sigma^2\chi^2(n-1)/(n-1)$, an average within-study variance was set to $\sigma^2=0.1$, and the study sample size was set to $n=30$, where $\chi^2(n-1)$ is a random number from a $\chi^2$ distribution with $n-1$ degrees of freedom.
	The degree of heterogeneity is controlled using the ratio $\tau^2/\sigma^2$.
	The heterogeneity parameter was set to $\tau^2=0.01,0.05,0.1$, or $1$, which corresponds to $\tau^2/\sigma^2=0.1,0.5,1$, or $10$; mean $I^2$ values were 9.1\%, 33.3\%, 50.0\%, or 90.9\%, respectively.
	In addition to the above situation where (ii-a) with all studies of similar size, we consider situations (ii-b) with one large study or (ii-c) with one small study (i.e., a within study variance 10 times smaller or 10 fold higher than the others; one study was randomly selected and the within study variance was set to $\sigma_k^2/10$ or $10\sigma_k^2$).
	The average treatment effect $\mu$ was fixed at $1$.
	The number of studies was set to $K=3,5,7,10,25$, or $100$.
	\item We generated data for $K \times 2 \times 2$ tables using a method similar to that used by Sidik and Jonkman \cite{Sidik2007} and considered meta-analyses based on log odds-ratios.
	The heterogeneity variance was set to $\tau^2=0.01,0.1,0.2,0.4$, or $0.6$.
	The average treatment effect was set to $\mu=0, -0.5$, or $0.5$ to assess the impact of standard errors for odds-ratios.
	The number of studies was set to $K=3,6,12,24,48$, or $96$.
	For each $\tau^2$ and $K$, we first generated $\theta_k$ from $N(\mu, \tau^2)$.
	For each study, the sample sizes were set to be equal $n_{0k}=n_{1k}$, and were randomly selected from integers between 20 and 200.
	The responses of the control group, $X_{0k}$, were generated from a binomial distribution $Bin(n_{0k}, p_{0k})$, and the probability $p_{0k}$ was randomly drawn from a uniform distribution $U(0.05, 0.65)$.
	The responses of the treatment group, $X_{1k}$, were generated from a binomial distribution $Bin(n_{1k}, p_{1k})$ and probability $p_{1k}=p_{0k}\exp\{\theta_k\}(1-p_{0k}+p_{0k}\exp\{\theta_k\})$.
	Finally, we constructed an estimator of $\theta_k$ as $Y_k=\log[X_{1k}(n_{0k}-X_{0k})/\{X_{0k}(n_{1k}-X_{1k})\}]$, its variance estimator as $\hat{\sigma}^2=1/X_{1k} + 1/(n_{1k}-X_{1k})+1/X_{0k}+1/(n_{0k}-X_{0k})$, and we used $\hat{\sigma}^2$ rather than $\sigma^2$.
	If any cells are empty, we added 0.5 to each cell for all $K$ tables.
	Mean $I^2$ values were 7.1\%, 42.3\%, 58.7\%, 73.1\%, or 79.7\%, which correspond to $\tau^2=0.01,0.1,0.2,0.4$, or $0.6$, respectively.
\end{enumerate}
For each setting, we simulated 25\,000 replications.
For each method, two-tailed 95\% prediction intervals were calculated.
The number of bootstrap samples $B$ was set to 5\,000.
The coverage probability was estimated by the proportion of simulated prediction intervals containing the result of a future study $\theta_{new}$ that was generated from a normal distribution $N(\mu, \tau^2)$.

\begin{figure}
	\centering
	\includegraphics[width=0.9\textwidth]{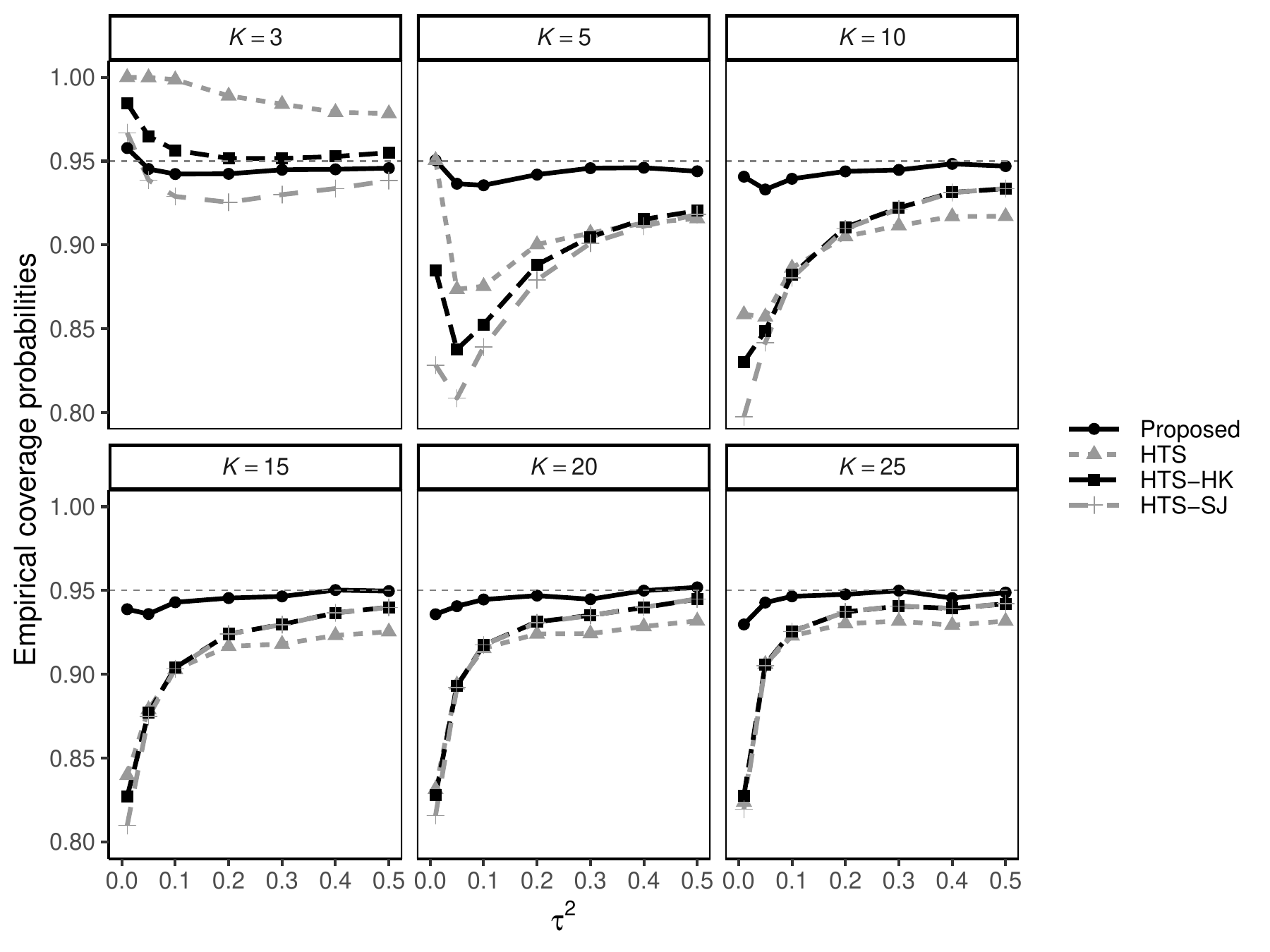}
	\caption{
		Simulation results (i): the performance of the HTS and proposed prediction intervals.
		The number of studies $K=3,5,10,15,20$, or $25$.
		The number of simulations was 25\,000.
		Methods: Proposed, the proposed prediction interval; HTS, the HTS prediction interval; HTS-HK, the HTS-type prediction interval following REML with the Hartung--Knapp variance estimator; HTS-SJ, the HTS-type prediction interval following REML with the Sidik--Jonkman bias-corrected variance estimator.
	}
	\label{fig:1}
\end{figure}

\begin{figure}
	\centering
	\includegraphics[width=0.9\textwidth]{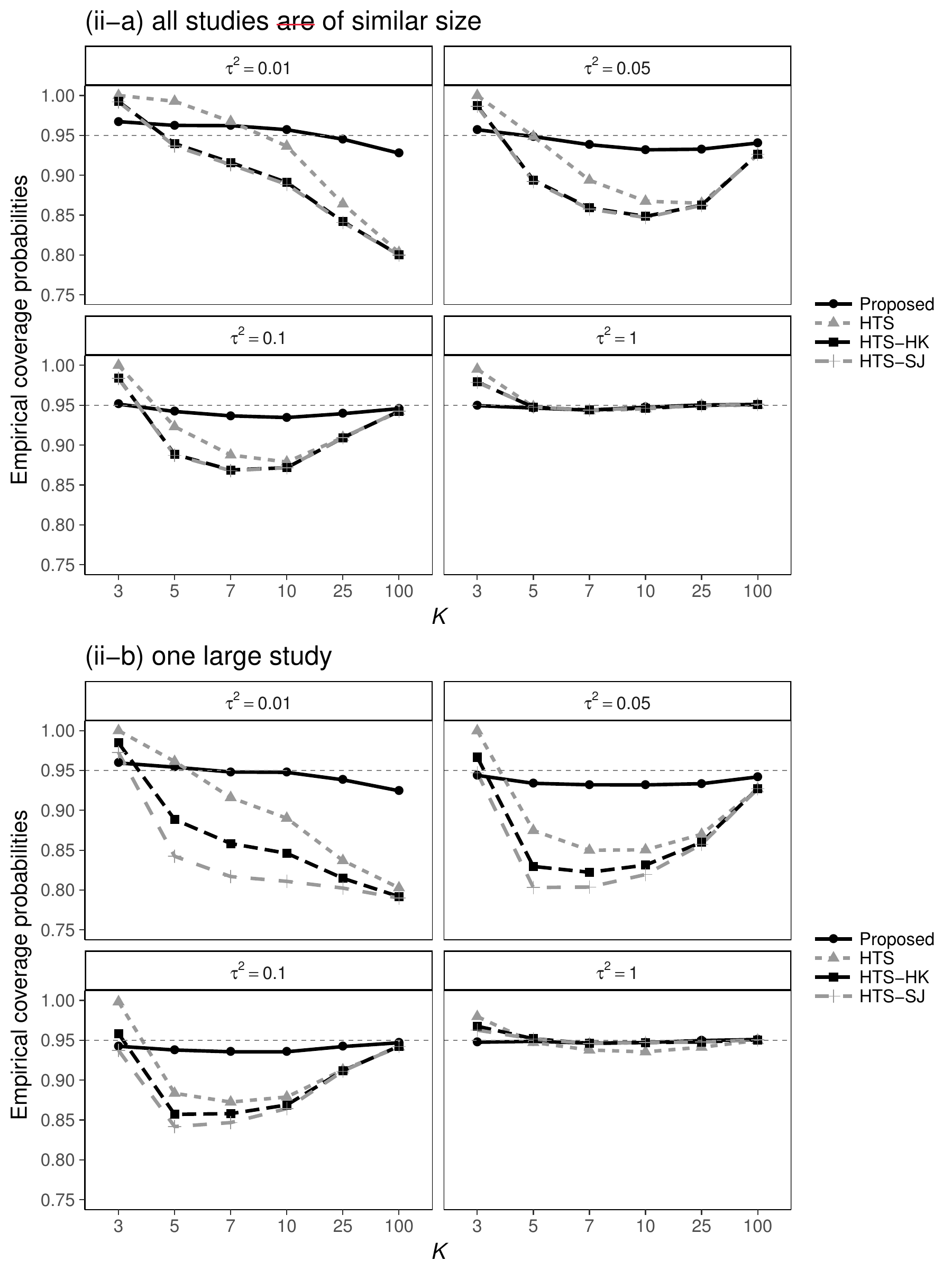}
	\caption{
		Simulation results (ii): the performance of the HTS and proposed prediction intervals (ii-a) with all studies of similar size and (ii-b) with one large study.
		The heterogeneity parameters $\tau^2=0.01,0.05,0.1$, or $1$.
		The number of simulations was 25\,000.
		Methods: Proposed, the proposed prediction interval; HTS, the HTS prediction interval; HTS-HK, the HTS-type prediction interval following REML with the Hartung--Knapp variance estimator; HTS-SJ, the HTS-type prediction interval following REML with the Sidik--Jonkman bias-corrected variance estimator.		
	}
	\label{fig:2}
\end{figure}

\begin{figure}
	\centering
	\includegraphics[width=0.9\textwidth]{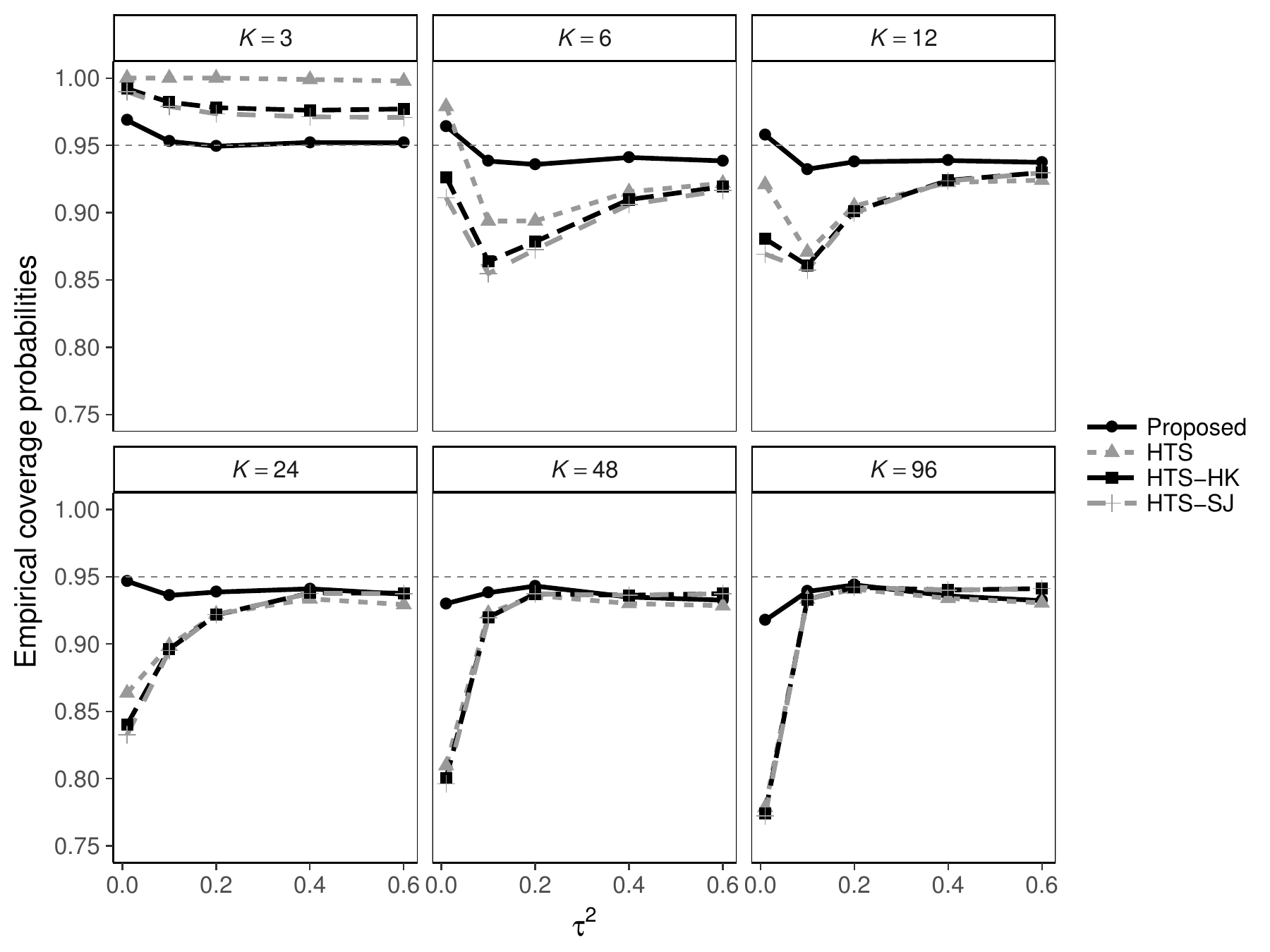}
	\caption{
		Simulation results (iii): the performance of the HTS and proposed prediction intervals for $\mu=0$.
		The number of studies $K=3,6,12,24,48$, or $96$.
		The number of simulations was 25\,000.
		Methods: Proposed, the proposed prediction interval; HTS, the HTS prediction interval; HTS-HK, the HTS-type prediction interval following REML with the Hartung--Knapp variance estimator; HTS-SJ, the HTS-type prediction interval following REML with the Sidik--Jonkman bias-corrected variance estimator.
	}
	\label{fig:3}
\end{figure}

The results of simulation (i) are presented in Figure \ref{fig:1}.
The coverage probabilities of the HTS prediction interval were approximately 90\%, far short of the nominal level of 95\%.
The under-coverage of the HTS prediction interval reflects the rough $t$-approximation; thus, the source of the problem is substitution of an estimate for $\tau^2$ or ignoring uncertainty in $\tau^2$.
The results show that the HTS-HK and HTS-SJ prediction intervals are also deficient.
The coverage probabilities for the HTS-HK and HTS-SJ prediction intervals almost retained the nominal level except in situations where the relative degree of heterogeneity is small or moderate.
For example, the coverage probabilities of the HTS-HK prediction interval were 82.8\%--98.5\% for $\tau^2=0.01$, 83.8\%--96.5\% for $\tau^2=0.05$, and 85.3\%--95.6\% for $\tau^2=0.1$; the coverage probabilities of the HTS-SJ prediction interval were 79.7\%--96.7\% for $\tau^2=0.01$, 80.9\%--93.9\% for $\tau^2=0.05$, and 83.9\%--92.9\% for $\tau^2=0.1$.
By contrast, the coverage probabilities for the proposed prediction interval almost always retained the nominal level.
The only exception was when $K = 25$ and $\tau^2=0.01$, where the coverage probability for the proposed prediction interval was 93.0\%, slightly below the nominal level.
In this case, the coverage probability for the HTS, HTS-HJ, and HTS-SJ prediction intervals were even smaller, at 82.4\%, 82.8\%, and 82.0\%, respectively.
Analyses using very few studies ($K<5$) pose problems in random-effects models, as discussed by Higgins \emph{et al.} \cite{Higgins2009}.
Nevertheless, the proposed method performed well even when $K=3$.
The nominal level was attained for nearly  all values of the heterogeneity parameter in the proposed prediction interval.

The results of simulation (ii-a), with all studies of similar size, are presented in Figure \ref{fig:2}.
The results show that all HTS prediction intervals are also deficient except for $\tau^2=0.001, 0.05$.
The coverage probabilities for all HTS prediction intervals almost retained the nominal level for $\tau^2 = 1$ and $K \geq 5$.
The coverage probabilities were too large for $K=3$ and too small for $K=5$--$25$ and $\tau^2 = 0.1$.
In the case of $\tau^2=0.01,0.05$, the coverage probabilities of the HTS-HJ and HTS-SJ prediction intervals were too small for $K=5$--$100$, and the coverage probability of the HTS prediction interval was too small for $K=10$--$100$.
By contrast, the coverage probabilities for the proposed prediction interval almost always retained the nominal level.
The only exception was when $\tau^2=0.01$, where the coverage probabilities for the proposed prediction interval were 92.8\%--96.7\%, slightly below the nominal level for $K=100$.
The results of simulation (ii-b), with one large study, are presented in Figure \ref{fig:2}.
The coverage probabilities appear to be relatively poor compared to the balanced case (ii-a) even in large heterogeneity variance not only for the HTS-HK prediction interval, but also for the HTS-SJ prediction interval.
Moreover, the performance of the HTS-SJ prediction interval was somewhat poorer (showed under coverage) compared to the HTS-HK prediction interval.
In contrast, the coverage probabilities for the proposed prediction interval nearly always retained the nominal level.
The results of simulation (ii-c), with one small study, are presented in Supplementary Figure S1.
The coverage probabilities were similar to those of the balanced case (ii-a).

The results of simulation (iii) for $\mu=0$ are presented in Figure \ref{fig:3}.
The coverage probabilities for all HTS prediction intervals were too large for $K=3$, too small for $K = 6, 12, 24, 48, 96$ and $\tau^2 \leq 0.1$, and nearly retained the nominal level for $\tau^2 = 0.6$ or $K = 96$.
In contrast, the coverage probabilities for the proposed prediction interval nearly retained the nominal level, except for $K=96$ and $\tau=0.01$.
The results of simulation (iii) for $\mu=-0.5,0.5$ are presented in Supplementary Figures S2 and S3.
The coverage probabilities for $\mu=-0.5,0.5$ were similar to those of the case for $\mu=0$.

In summary, the HTS prediction intervals have insufficient coverage, except when the relative degree of heterogeneity is large and may show severe under-coverage under realistic meta-analysis settings, possibly providing misleading results and interpretation.
In contrast, the proposed prediction interval achieves the nominal level of coverage.

\section{Applications}
\label{sec:4}
We applied the methods to the following three published random-effects meta-analyses.
\begin{enumerate}
	\renewcommand{\labelenumi}{(\theenumi)}
	\renewcommand{\theenumi}{\Alph{enumi}}
	\item Set-shifting data: Higgins \emph{et al.} \cite{Higgins2009} re-analyzed data \cite{Roberts2007} that included 14 studies evaluating the set-shifting ability in people with eating disorders by using a prediction interval.
	Standardized mean differences in the time taken to complete Trail Making Test between subjects with eating disorders and healthy controls were collected.
	Positive estimates indicate impairment in set shifting ability in people with eating disorders.
	\item Pain data: The pain data \cite{Riley2011a,Hauser2009} included 22 studies comparing the treatment effect of antidepressants on reducing pain in patients with fibromyalgia syndrome.
	The treatment effects were summarized using standardized mean differences on a visual analog scale for pain between the antidepressant group and control group.
	Negative estimates indicate the reduction of pain in the antidepressant group.
	\item Systolic blood pressure (SBP) data: Riley \emph{et al.} \cite{Riley2011a} analyzed a hypothetical meta-analysis.
	They generated a data set of 10 studies examining the same antihypertensive drug.
	Negative estimates suggested reduced blood pressure in the treatment group.
\end{enumerate}
These data sets are reproduced in Figure \ref{fig:4}.
The number of bootstrap samples $B$ was set to 50\,000.

\begin{figure}
	\centering
	\includegraphics[height=0.95\textheight]{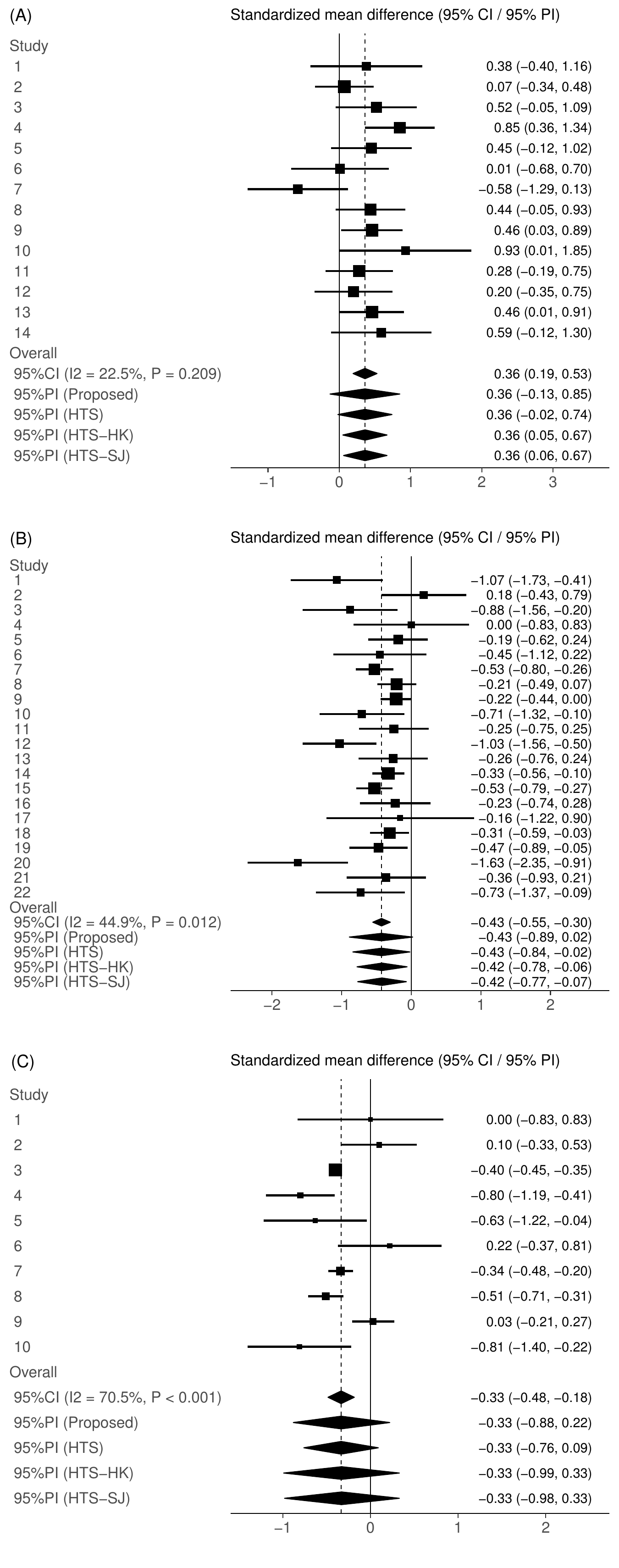}
	\caption{
		The three data sets and summary results: (A) Set-shifting data \cite{Roberts2007} ($K=14$), (B) Pain data \cite{Hauser2009} ($K=22$), and (C) SBP data \cite{Riley2011a} ($K=10$).
		Abbreviations: CI, confidence interval; PI, prediction interval.
	}
	\label{fig:4}
\end{figure}

\begin{table}
	\small\sf\centering
	\caption{
		Results from the three data sets: the average treatment effect ($\hat{\mu}$) and its 95\% confidence interval, heterogeneity measures ($\hat{\tau}_{DL}^2$, $\hat{\tau}_{R}^2$ and $\hat{I}^2$), the $P$-value for the test of heterogeneity, the proposed prediction interval, and the HTS prediction intervals.
	}
	\label{tab:1}
	\begin{tabular}{ccccc} \toprule
		\multicolumn{2}{c}{Data} & Set-shifting & Pain & SBP \\ 
		\multicolumn{2}{c}{} & ($K=14$) & ($K=22$) & ($K=10$) \\ \hline
		\multicolumn{2}{c}{$\hat{\mu}$ (DL)} & 0.36 & \llap{$-$}0.43 & \llap{$-$}0.33 \\
		\multicolumn{2}{c}{95\%CI (DL)} & [0.19, 0.53] & [$-$0.55, $-$0.30] & [$-$0.48, $-$0.18] \\ \midrule
		\multicolumn{2}{c}{$\hat{\tau}_{DL}^2$} & 0.023 & 0.034 & 0.023 \\
		\multicolumn{2}{c}{$\hat{\tau}_{R}^2$} & 0.013 & 0.025 & 0.070 \\
		\multicolumn{2}{c}{$\hat{I}^2$ (DL)} & 22.5\% & 44.9\% & 70.5\% \\
		\multicolumn{2}{c}{$P$-value for heterogeneity} & 0.209 & 0.012 & $<$0.001 \\ \midrule
		95\%PI & Proposed  & [$-$0.13, 0.85] & [$-$0.89, 0.02] & [$-$0.88, 0.23] \\
		& HTS & [$-$0.02, 0.74] & [$-$0.84, $-$0.02] & [$-$0.76, 0.09] \\
		& HTS-HK & [0.05, 0.67] & [$-$0.78, $-$0.06] & [$-$0.99, 0.33] \\
		& HTS-SJ & [0.06, 0.67] & [$-$0.77, $-$0.07] & [$-$0.98, 0.33] \\
		\midrule
		length of & Proposed  & 0.98 & 0.91 & 1.10 \\
		95\%PI & HTS & 0.76 & 0.82 & 0.85 \\
		& HTS-HK & 0.62 & 0.72 & 1.32 \\
		& HTS-SJ & 0.61 & 0.72 & 1.31 \\ \bottomrule
	\end{tabular}
\end{table}

Table \ref{tab:1} presents estimates of the average treatment effect and its confidence interval, heterogeneity measures, the $P$-value for the test of heterogeneity, the proposed prediction interval, and the HTS prediction intervals.
None of the confidence intervals for the average treatment effect included $0$ (set-shifting data: $[0.19, 0.53]$; pain data: $[-0.55, -0.30]$; SBP data: $[-0.48, -0.18]$).
This means that on average the interventions are significantly effective.
However, small, moderate, and large heterogeneity were observed in the three data sets (set-shifting data: $\hat{\tau}_{DL}^2=0.023$, $\hat{I}^2=22.5\%$; pain data: $\hat{\tau}_{DL}^2=0.185$, $\hat{I}^2=44.9\%$; SBP data: $\hat{\tau}_{DL}^2=0.023$, $\hat{I}^2=70.5\%$).
Accounting for heterogeneity, prediction intervals would provide additional relevant statistical information.
There were large differences between the 95\% confidence interval and prediction intervals, even in the case of small heterogeneity.

As shown in Figure \ref{fig:4} and summarized in Table \ref{tab:1}, the proposed prediction intervals were substantially wider than the HTS prediction intervals in all three analyses.
The proposed prediction intervals were $29\%$, $11\%$, and $31\%$ wider than the HTS prediction intervals for the set-shifting data, pain data, and SBP data, respectively.
As observed in Section \ref{sec:3}, the HTS-HK and HTS-SJ prediction intervals showed similar results.
The proposed prediction intervals were $58\%$, $27\%$ wider and $17\%$ narrower than the HTS-HK (or HTS-SJ) prediction intervals for the set-shifting data, pain data, and SBP data, respectively.
Only for the SBP data, the proposed prediction interval was narrower than the HTS-HK prediction interval; this is because the two intervals were based on different heterogeneity variance estimators and $\hat{\tau}_{DL}^2<\hat{\tau}_{R}^2$.

The prediction intervals may lead to different interpretations of the results.
In the set-shifting data, the HTS-HK and HTS-SJ prediction intervals did not include $0$, but the proposed prediction interval included $0$.
For the pain data, the HTS, HTS-HK and HTS-SJ prediction intervals did not include $0$, in a frequentist sense, suggesting that the intervention may be beneficial in most subpopulations.
In contrast, the proposed prediction interval included $0$, indicating that the intervention may not be beneficial in some subpopulations.
However, taking a Bayesian perspective, all the prediction intervals suggest that there is a large probability and the treatment will be effective in a new population.
The simulation results in Section \ref{sec:3} suggest that the HTS prediction intervals could have under-coverage in situations where the relative degree of heterogeneity is small or moderate.
Since $\hat{\tau}^2_{DL}$ of three data sets and $\hat{\tau}^2_{R}$ of set-shifting and pain data were small ($\approx 0.02$), it may be too narrow under realistic situations and may provide misleading results.

\section{Discussion and conclusion}
\label{sec:5}
For the random-effects model in meta-analysis, the average treatment effect and its confidence interval have been used with heterogeneity measures such as the $I^2$-statistic and $\tau^2$.
However, results from random-effects models have sometimes been misinterpreted.
Thus, the new concept ``prediction interval'' was proposed, which is useful in applying the results to other subpopulations and in decision making.
The HTS prediction intervals have a theoretical problem, namely that its rough $t$-approximation could have a detrimental impact on the coverage probability.
We have presented an appropriate prediction interval to account for the uncertainty in $\tau^2$ by using a confidence distribution.

Simulation studies showed that the HTS prediction intervals could have severe under-coverage for realistic meta-analysis settings and might lead to misleading results and interpretation.
The simulation results suggested that the HTS prediction interval may be too narrow when analyzing a small number of studies.
This interval is valid when $K \ge 25$, but in many meta-analyses $K$ is much smaller than 25.
The HTS-HK and HTS-SJ prediction intervals may be too narrow when the relative degree of heterogeneity is small.
By contrast, the coverage probabilities for the proposed prediction interval satisfactorily retained the nominal level.
Although Higgins \emph{et al.} \cite{Higgins2009} cautioned that the random-effects model may not work well under very small numbers of studies ($K<5$), the proposed method performed well even when $K=3$.
Since the heterogeneity parameter had very little effect on the performance of the proposed prediction interval, the method would be valid regardless of the value of the heterogeneity parameter.
Moreover, all prediction intervals (i.e., the random-effects model in \eqref{equ:1}) assume normality of the between-study distribution of true effects, $u_k \sim N(0, \tau^2)$, but the assumption may not be true in practice.
A full Bayesian approach may be useful for constructing a suitable prediction interval.

Applications to the three published random-effects meta-analyses concluded that substantially different results and interpretation might be obtained from the prediction intervals.
Since the HTS prediction interval is always narrower and the HTS-HK and HTS-SJ prediction intervals are narrower when the heterogeneity parameter is small or moderate, we should be cautious in using and interpreting these approaches.

In conclusion, we showed that the proposed prediction interval works well and is suitable for random-effects meta-analysis.
As shown in the three illustrative examples, quite different results and interpretations are obtained using our new method.
Extensions of these results to other complicated models such as network meta-analysis are now warranted.

\section*{Acknowledgements}
This study was supported by CREST from the Japan Science and Technology Agency (Grant number: JPMJCR1412).

\newcommand{\noopsort}[1]{}

\clearpage
\appendix
\appendix
\section*{Proof of Theorem 1}

\begin{proof}[Proof of Theorem \ref{the:1}]
	(R1) Since $F_T$ is a continuous distribution function, $H(\phi)=1-F_T(T(\boldsymbol{\mathrm{y}}); \phi)$ is continuous on $\boldsymbol{\mathrm{Y}}\times \Phi \to [0, 1]$.
	(R2) By the continuity of $F_T$, a derivative, $g(\phi)=\mathrm{d} F_T(T(\boldsymbol{\mathrm{y}}); \phi)/\mathrm{d} \phi$, exists, and $G(\phi)=\int g(\phi) \,\mathrm{d} \phi=F_T(T(\boldsymbol{\mathrm{y}}); \phi)$.
	By (R1) and the monotone decreasingness of $F_T$, $G(\phi_{\min})=1$ and $G(\phi_{\max})=0$.
	Therefore, $H(\phi)$ can be written as $1-\int_{\phi}^{\phi_{\max}} -g(s) \,\mathrm{d} s=1-G(\phi)$.
	Writing $h(\phi)=-g(\phi)$, we find $1-\int_{\phi}^{\phi_{\max}} h(s) \,\mathrm{d} s=\int_{\phi_{\min}}^{\phi} h(s) \,\mathrm{d} s$.
	Thus, $H(\phi)$ is clearly a cumulative distribution function on $\phi$.
	(R3) At the true parameter value $\phi=\phi_0$, it follows that $1-F_T(T(\boldsymbol{\mathrm{y}}); \phi_0) \sim U(0, 1)$.
	Thus, by Definition \ref{def:1}, $H(\phi)$ is a confidence distribution for the parameter $\phi$, and $h(\phi)$ is a confidence density function for $\phi$.
\end{proof}

\section*{Supplementary Figures}
\setcounter{figure}{0}
\def\thefigure{S\arabic{figure}}

\begin{enumerate}
	\setlength{\itemindent}{36pt}
	\setlength{\labelwidth}{66pt}
	\item[\textbf{Figure S1}.] Simulation results (ii): the performance of the HTS and proposed prediction intervals with one small study.
	\item[\textbf{Figure S2}.] Simulation results (iii): the performance of the HTS and proposed prediction intervals for $\mu=0.5$.
	\item[\textbf{Figure S3}.] Simulation results (iii): the performance of the HTS and proposed prediction intervals for $\mu=-0.5$.
\end{enumerate}

\begin{figure}[h]
	\centering
	\includegraphics[width=0.9\textwidth]{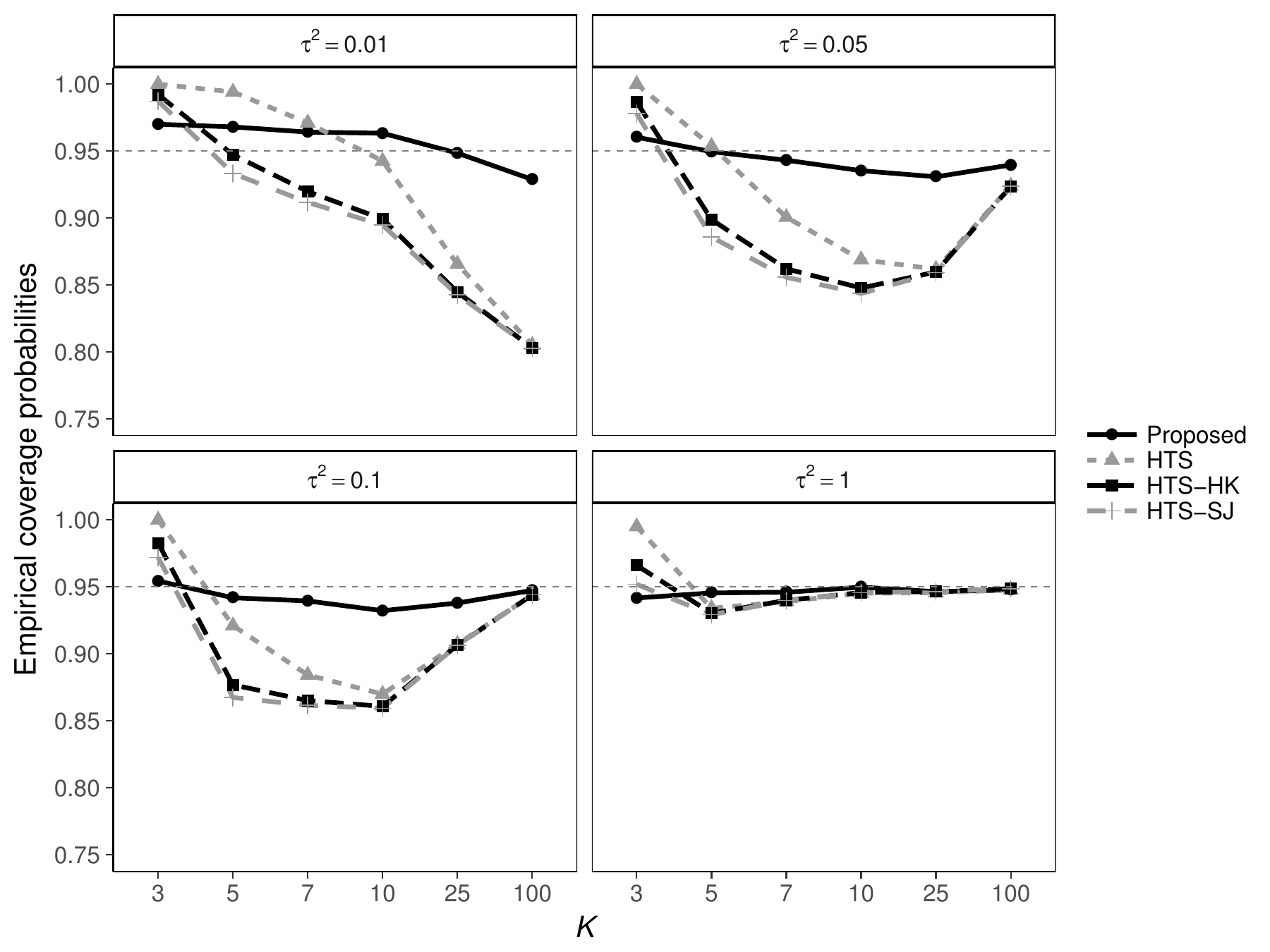}
	\caption{
		Simulation results (ii): the performance of the HTS and proposed prediction intervals with one small study.
		The heterogeneity parameters $\tau^2=0.01,0.05,0.1$, or $1$.
		The number of simulations was 25\,000.
		Methods: Proposed, the proposed prediction interval; HTS, the HTS prediction interval; HTS-HK, the HTS-type prediction interval following REML with the Hartung--Knapp variance estimator; HTS-SJ, the HTS-type prediction interval following REML with the Sidik--Jonkman bias-corrected variance estimator.		
	}
	\label{fig:s1}
\end{figure}

\begin{figure}[h]
	\centering
	\includegraphics[width=0.9\textwidth]{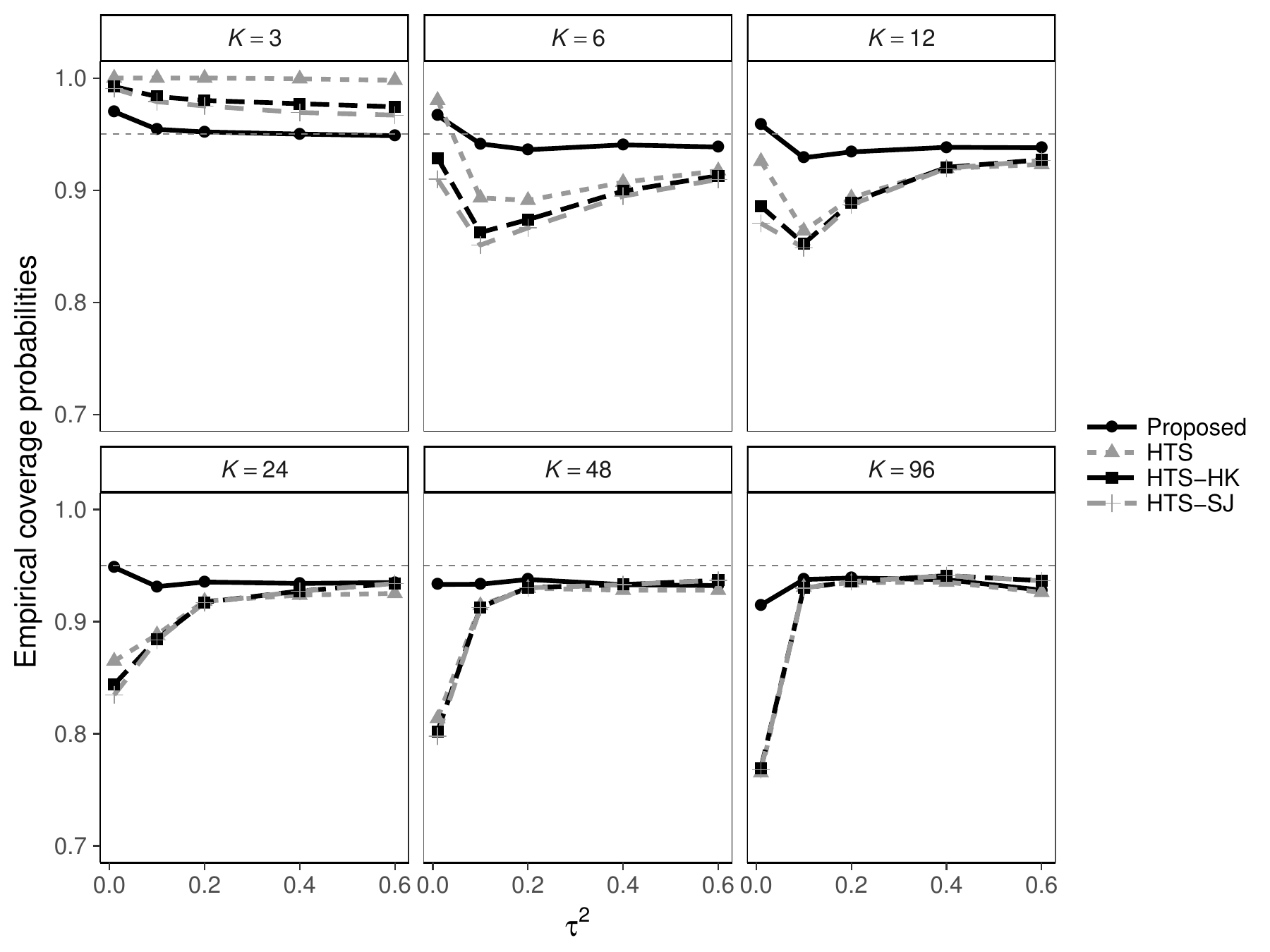}
	\caption{
		Simulation results (iii): the performance of the HTS and proposed prediction intervals for $\mu=0.5$.
		The number of studies $K=3,6,12,24,48$, or $96$.
		The number of simulations was 25\,000.
		Methods: Proposed, the proposed prediction interval; HTS, the HTS prediction interval; HTS-HK, the HTS-type prediction interval following REML with the Hartung--Knapp variance estimator; HTS-SJ, the HTS-type prediction interval following REML with the Sidik--Jonkman bias-corrected variance estimator.	
	}
	\label{fig:s2}
\end{figure}

\begin{figure}[h]
	\centering
	\includegraphics[width=0.9\textwidth]{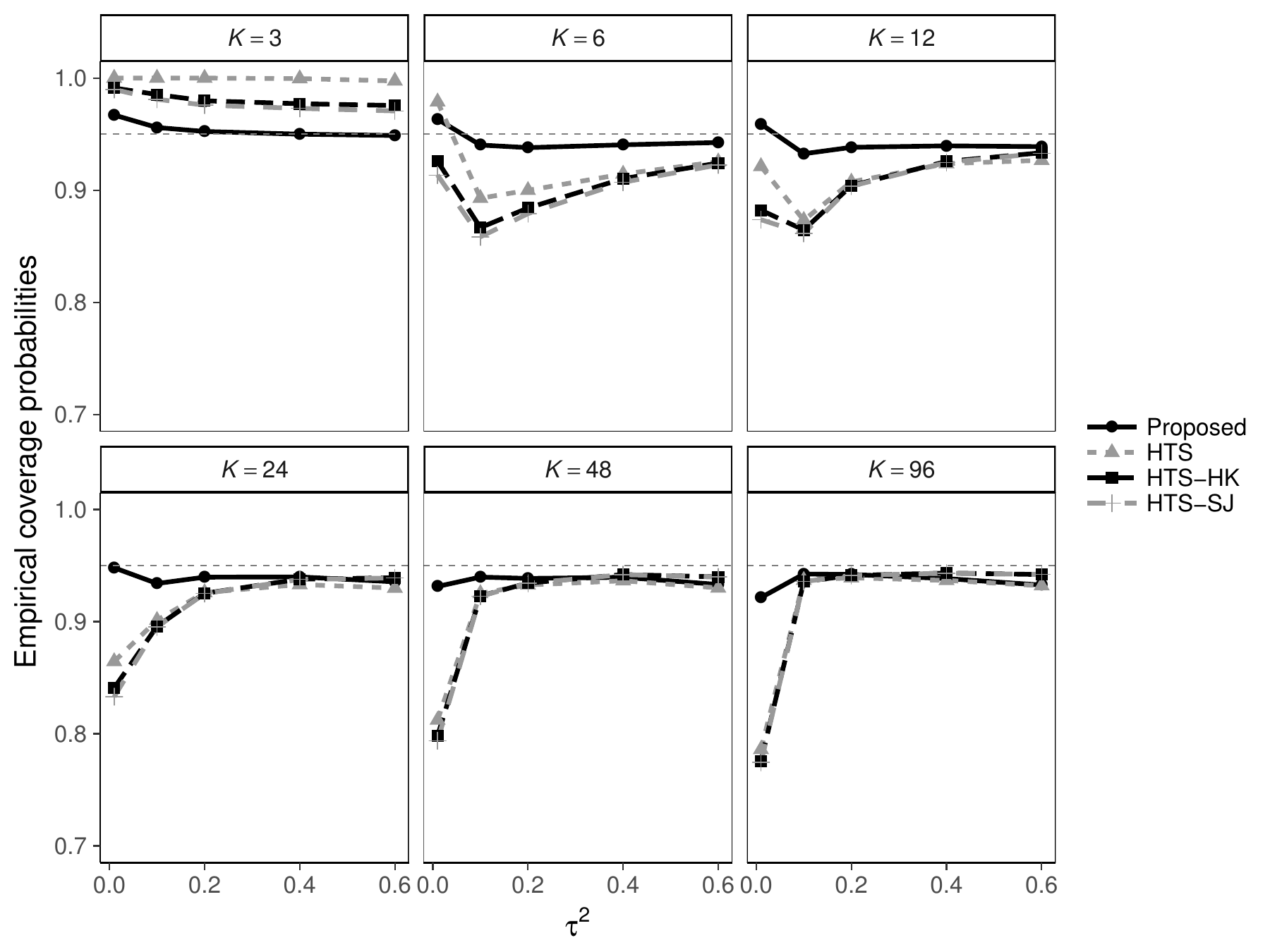}
	\caption{
		Simulation results (iii): the performance of the HTS and proposed prediction intervals for $\mu=-0.5$.
		The number of studies $K=3,6,12,24,48$, or $96$.
		The number of simulations was 25\,000.
		Methods: Proposed, the proposed prediction interval; HTS, the HTS prediction interval; HTS-HK, the HTS-type prediction interval following REML with the Hartung--Knapp variance estimator; HTS-SJ, the HTS-type prediction interval following REML with the Sidik--Jonkman bias-corrected variance estimator.
	}
	\label{fig:s3}
\end{figure}

\end{document}